\documentclass[preprint]{vgtc}                     % final (journal style)
\onlineid{0}

\vgtccategory{Research}

\vgtcpapertype{please specify}

\title{Automated Responsive Thematic Mapping with Layout Guides}

\author{%
    \authororcid{Arjen Simons}{0009-0008-1271-180X},
    \authororcid{Sarah Schöttler}{0000-0002-4898-2619},
    \authororcid{Wouter Meulemans}{0000-0002-4978-3400},
    \authororcid{Kevin Verbeek}{0000-0003-3052-4844}, and
    \authororcid{Bettina Speckmann}{0000-0002-8514-7858}
}

\authorfooter{
  \item
    All authors are with Eindhoven University of Technology. 
    E-mail: \{~a.simons1 | s.j.schottler | w.meulemans | k.a.b.verbeek | b.speckmann~\}@tue.nl
}

\abstract{%
Thematic maps visually communicate statistical information about spatial units such as countries or states. They must balance the individual readability of those map elements that carry the statistical information and the overall cartographic context. Nowadays, most maps are not static images, but must flexibly \emph{respond} to a range of device types and display sizes.
Current approaches to responsive thematic mapping are limited: they are labor-intensive for practitioners and often rely on combining disjointed visual encodings to cover different device types. 
In this paper we introduce the first algorithmic framework to efficiently compute responsive thematic maps that smoothly adapt to different display sizes. A key component of our framework is the \emph{layout guide}: a combinatorial structure which encodes the two essential aspects of a thematic map. The first aspect are the visual requirements of each statistical map element (at least their desired width and height), the second aspect is the cartographic context in the form of relative positions of map elements. Our main algorithmic contribution is the \emph{map arranger} which takes a visual container as input and returns a suitable layout guide. The map arranger does so in a \emph{stable} and \emph{consistent} manner: if the container changes only a little, then so does the layout guide, and the same input container always results in the same layout guide. To use our framework, one needs three ingredients: $(1)$ a reference layout, which corresponds to the ``ideal'' thematic map, $(2)$ a total vertical and horizontal order for all map elements (the desired layouts for containers with extreme aspect ratios), and $(3)$ a thematic mapping algorithm that can construct a thematic map from a layout guide. We demonstrate our framework on two types of thematic maps, namely rectangular and Demers cartograms.
}

\keywords{Responsive visualization, thematic mapping, algorithm.}

\teaser{
  \centering
  \includegraphics[alt={Pipeline}]{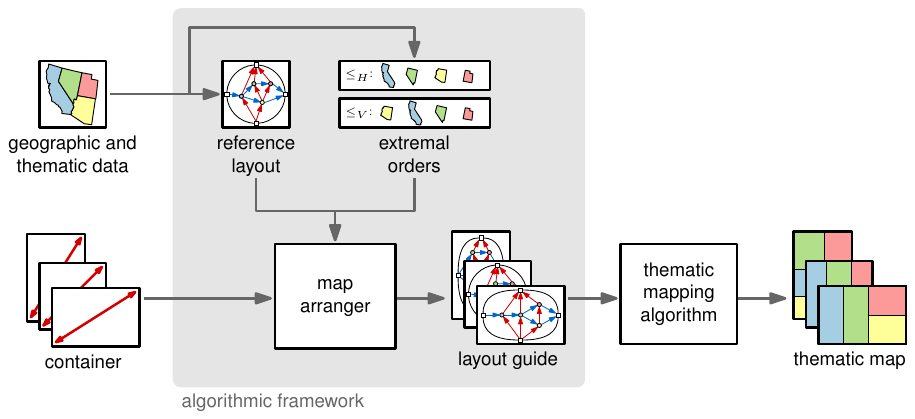}
  \caption{%
    Our algorithmic framework for responsive thematic mapping. To initialize the framework, the user provides a reference layout and extremal orders based on geographic and thematic map data. The map arranger adapts the reference layout to generate layout guides for containers of arbitrary aspect ratios. The layout guides in turn serve as input for a suitable thematic mapping algorithm which generates the final thematic map for each container.
  }
  \label{fig:teaser}
}

\graphicspath{{figs/}{figures/}{pictures/}{images/}{./}} % where to search for the images

\usepackage{tabu}                      % only used for the table example
\usepackage{booktabs}                  % only used for the table example
\usepackage{lipsum}                    % used to generate placeholder text
\usepackage{mwe}                       % used to generate placeholder figures

\usepackage{apxproof}

\usepackage{amsmath}
\usepackage{thm-restate}
\newtheorem{theorem}{Theorem}

\newcommand{\mypar}[1]{\smallskip\noindent\textbf{#1}}

\usepackage{xcolor}
\usepackage{stfloats}
\usepackage{subcaption}
\usepackage{multirow}
\usepackage{wrapfig}

\newcommand{\layout}{\mathcal{L}}
\newcommand{\refLayout}{\mathcal{R}}
\newcommand{\container}{\mathcal{C}}

\begin{document}

%%%%%%%%%%%%%%%%%%%%%%%%%%%%%%%%%%%%%%%%%%%%%%%%%%%%%%%%%%%%%%%%
%%%%%%%%%%%%%%%%%%%%%% START OF THE PAPER %%%%%%%%%%%%%%%%%%%%%%
%%%%%%%%%%%%%%%%%%%%%%%%%%%%%%%%%%%%%%%%%%%%%%%%%%%%%%%%%%%%%%%%

%% The ``\maketitle'' command must be the first command after the
%% ``\begin{document}'' command. It prepares and prints the title block.
%% the only exception to this rule is the \firstsection command
\firstsection{Introduction}

\maketitle

Thematic maps visually communicate geographic information in a way that allows users to recognize spatial patterns, distributions, and trends. To do so, they utilize color (choropleth maps), visual overlays (proportional symbol, necklace~\cite{speckmannNecklaceMaps2010}, or flow maps), deformed map elements (schematic maps and cartograms), or a combination~\cite{dentCartographyThematicMap2009,slocumThematicCartographyGeovisualization2022}. Nowadays, thematic maps are only rarely found as printed images in a physical atlas; instead, they are displayed on a plethora of devices with different display sizes and resolutions. Thematic maps must hence become \emph{responsive}, that is, adapt to the display space they are afforded. Consider, for example, the cartograms of election results that frequently accompany news coverage. These maps should be both readable and informative, independent of the device a reader uses to access the news.

\mypar{Responsive Visualization}
has received significant interest in recent years and there is a range of established approaches to facilitate responsive visualization design. 
Collections of design patterns \cite{kimDesignPatternsTradeOffs2021,hoffswellTechniquesFlexibleResponsive2020} describe general techniques and strategies to adapt visualizations across different devices. These strategies often involve rearranging the layout of the visualization, for example by rescaling or transposing axes, serializing layouts, or relocating individual visual elements. As such they are difficult to directly apply to maps. Libraries and authoring tools support visualization designers with manually designing visualizations for a range of devices \cite{hoffswellTechniquesFlexibleResponsive2020,kimCiceroDeclarativeGrammar2022,kimDupoMixedInitiativeAuthoring2024,andrewsRespVisD3Extension2023,schottlerConstraintBasedBreakpointsResponsive2025}. And for a limited range of visualization techniques, automated approaches have been proposed: a reinforcement learning approach can resize visualizations in Cartesian coordinate systems such as bar and line charts~\cite{wuMobileVisFixerTailoringWeb2021}; multiple views can be rearranged with a framework using simulated annealing~\cite{zengSemiAutomaticLayoutAdaptation2024}; modified seam carving algorithms can resize word clouds and similar visualizations~\cite{wuViSizerVisualizationResizing2013} as well as network visualizations~\cite{digiacomoNetworkVisualizationRetargeting2015}. None of these approaches directly translate to maps.

Responsive visualization requires that visualizations contain internal logic \cite{andrewsResponsiveVisualisation2018} or rules \cite{schottlerConstraintBasedBreakpointsResponsive2025} to adapt the visualization to a range of factors~\cite{horakResponsiveVisualizationDesign2021}. Many different factors can be considered, including, for example, different interaction modalities, usage scenarios, and environments. However, practitioners generally identify the changing \emph{area} and \emph{aspect ratio} of the available display space as the central challenge, particularly in the context of geographic visualizations and, specifically, thematic maps~\cite{schottlerPracticesStrategiesResponsive2025,hoffswellTechniquesFlexibleResponsive2020,houtmanTechniquesChallengesOpportunities2026}. Consequently, the need for further research in this area, with a particular focus on the development of novel techniques, has been highlighted in both the cartographic and visualization research communities~\cite{rothMakingMapsVisualizations2024,houtmanTechniquesChallengesOpportunities2026,schottlerPracticesStrategiesResponsive2025}.

Recent research has found that practitioners often combine different visualizations to cover a range of display sizes for thematic maps~\cite{schottlerPracticesStrategiesResponsive2025}. Responsive visualizations created in this manner can rely on a wide range of existing algorithms and design guidance and do not introduce novel visualization designs that may be unfamiliar to users. However, they also have low visual consistency across devices, may have significant loss of information at smaller sizes, and are time-intensive to develop and maintain for practitioners~\cite{horakResponsiveVisualizationDesign2021}. 
It is also far from obvious how to create effective combinations of visualization types~\cite{schottlerConstraintBasedBreakpointsResponsive2025} and many open questions in this direction remain.

In contrast, our goal in this paper is to develop an algorithmic framework to efficiently compute responsive
thematic maps that smoothly adapt to different display sizes. Below we discuss the cartographic principles, design choices, and trade-offs that underlie our algorithmic model for responsive thematic mapping.

\mypar{Responsive Thematic Mapping.}
Thematic maps visually communicate statistical information about spatial units such as countries and states. This statistical information is carried by specific map elements: the colors or size of map regions in a choropleth map or a cartogram and the size of thematic overlays such as symbols in a proportional symbol, necklace, or flow map. To communicate the spatial context of the statistical information, these \emph{statistical map elements} are arranged in a geographically meaningful manner. For thematic overlays this usually implies some spatial relation of the statistical elements to the regions they represent: either attached to a reference point, or inside a region, or at least in the correct spatial direction in the case of a necklace bead \cite{speckmannNecklaceMaps2010}. For choropleth maps the regions themselves carry the statistical information in the form of color and for cartograms the relative positions of the scaled and abstracted regions attempt to capture those of the regions in the input map.

Thematic maps must always strike a balance between the readability and interpretability of the statistical map elements and the overall cartographic context of the map, as represented by the spatial relations of the statistical map elements to each other or to the base map. Cartographic textbooks~\cite{dentCartographyThematicMap2009,slocumThematicCartographyGeovisualization2022} provide guidance for this trade-off. However, they mostly consider static maps where the aspect ratio of the map can be chosen by the map designer and the resolution is high. In contrast, responsive thematic mapping requires a map design that adapts to an externally imposed \emph{container} with a fixed aspect ratio and resolution. 

To facilitate an explicit trade-off between \emph{readability} and \emph{relative position} of the statistical map elements for a given container, we must first formalize the requirements for responsive thematic mapping.  First of all, there is a choice between always displaying all statistical map elements or reducing the information available to the user. Responsive reference maps follow the second paradigm and use established approaches such as panning, zooming, and map generalization~\cite{rothTypologyOperatorsMaintaining2011,raposoChangeThemeRole2020} to help the user navigate the maps. The corresponding map applications and navigation apps are well-understood and often work reliably across a wide range of devices~\cite{muehlenhausWebCartographyMap2013}.
Zooming and panning can be applied to some thematic map use cases, but practitioners often try to avoid parts of the map being off-screen if they want users to be able to recognize spatial distributions, patterns, and trends and make comparisons across regions
\cite{schottlerPracticesStrategiesResponsive2025}. 
Likewise, visual analytics tools often aggregate thematic overlays into generalized map views and support stable zooming. However, this form of aggregation is often not appropriate for small or medium-sized data sets such as the aforementioned election results.
% We therefore aim to develop thematic mapping techniques that can support these use cases and display the entire, unaggregated, dataset within the available display space.
% Since the purpose of thematic maps is to communicate statistical data with a geographic reference, practitioners argue that one should be able to display the statistical data in its entirety \cite{schottlerPracticesStrategiesResponsive2025}. Zooming and panning can obfuscate spatial patterns and generally hinders accurate comparisons across regions. 
 % Furthermore, these forms of interaction modalities do not easily carry over to mobile devices with small screens and interaction by touch. 
We hence choose to always {\bfseries display the statistical map elements in their entirety} on our responsive thematic maps.

The question now arises how to arrange the statistical map elements in a given container. For the purpose of this paper, we will {\bfseries focus solely on the aspect ratio of the container} and assume that the resolution is sufficient to support our chosen drawing style. When drawing the statistical map elements inside the container, we have to make further decisions. How much can each map element be distorted? In most thematic maps, the area of the map elements is used to communicate the statistical information. Furthermore, if the statistical map elements are regions, then preserving their general shape and aspect ratio will increase the readability of the map. Hence we generally aim to {\bfseries preserve the (relative) area and aspect ratio of the statistical map elements}.

Our second set of design choices centers on the relative position of the statistical map elements. To perfectly preserve relative positions, one must scale the thematic map to fit into the given container. For containers with large aspect ratios, this will invariably result in maps which are too small to be readable and which do not fill the container, thereby wasting valuable display space. 
Practitioners have repeatedly suggested to address this particular challenge by rearranging map layouts, that is, through splitting or segmenting the map and rearranging individual map partitions or spatial units \cite{schottlerPracticesStrategiesResponsive2025,houtmanTechniquesChallengesOpportunities2026}. A user study with a manually rearranged world map showed that this approach can be effective and well-received by users \cite{oeschThematicWorldMaps2025}. Furthermore, the use of cartograms and other schematic map types has been suggested, since their inherent spatial distortion makes them suitable for a wider range of display sizes through more flexible layouts \cite{schottlerPracticesStrategiesResponsive2025}.  

Clearly such approaches do not maintain all relative positions of the ``perfect'' thematic map. However, they {\bfseries choose particular sets of spatial relations to maintain while gradually dropping others}. To make such an approach amenable to algorithmic automation, we need a way to determine which relative positions to drop for a particular aspect ratio. This change in relative position requirements should take place in a {\bfseries stable} manner: if the container changes only a little, then most relative positions should be retained, making it easier for the user to track regions. Furthermore, the change in relative positions should be {\bfseries consistent}: given a (container) aspect ratio, the thematic map should always be the same. In other words, the relative positions in the final map should depend only on the container's aspect ratio, not on the sequence of prior aspect ratio changes. Last but not least, the visualization designer should have some \textbf{control} over the way that the relative positions are adapted as the aspect ratio changes.

\mypar{Contributions.}
In this paper we introduce the first algorithmic framework to efficiently compute responsive thematic maps that smoothly adapt to different display sizes. Central to our algorithmic framework is the concept of a \emph{layout guide} (see Section~\ref{sec:layout-guide}). A layout guide is a combinatorial structure which encodes the two essential aspects of a thematic map: the visual requirements of each statistical map element (at least their desired size and aspect ratio) and the cartographic context in the form of relative positions of map elements. Our main algorithmic contribution is the \emph{map arranger} (see Section~\ref{sec:map-arranger}) which takes the available display space as input and returns a suitable layout guide for this specific container. The map arranger does so in a stable and consistent manner: if the container changes only a little, then so does the layout guide, and the same input container always results in the same layout guide. To use our framework, one needs three ingredients: $(1)$ a reference layout, which corresponds to the thematic map displayed at its natural aspect ratio, $(2)$ a total vertical and horizontal order for all statistical map elements (the desired layouts for containers with extreme aspect ratios), and $(3)$ a thematic mapping algorithm that can compute a thematic map from a layout guide. Our framework is applicable to a wide range of thematic map types; we demonstrate its application to rectangular and Demers cartograms (see Section~\ref{sec:instantiations}). We close in Section~\ref{sec:discussion} with a discussion and future work.

\section{Layout Guide}
\label{sec:layout-guide}

The purpose of a layout guide is to minimally represent a layout of the given map elements, such that we can efficiently determine the required width and height of the map (or an estimation thereof) when drawn according to the represented layout, without actually constructing the map. Furthermore, it should be easy to modify a layout guide to fit a given container.

We represent a layout guide as a graph $\layout = (V, E)$ that is both directed and labeled. The nodes of $\layout$ correspond to the map elements to be visualized and contain some attributes with respect to their drawing requirements. Specifically, we require that each node $v \in V$ stores at least its desired width $w(v)$ and height $h(v)$ in the final visualization. Every directed edge $e = (u,v)$ should capture the relative position between the two map elements corresponding to $u$ and $v$. Here, we choose to use a simple binary label $r(e) \in \{H, V\}$ that indicates whether the relative positioning is either horizontal (H) or vertical (V). Specifically, for a directed edge $e = (u,v)$, $r(e) = H$ means that the map element corresponding to $v$ should be drawn to the right of the map element corresponding to $u$. Similarly, $r(e) = V$ means that the map element corresponding to $v$ should be drawn above the map element corresponding to $u$. We visually indicate these labels by coloring an edge $e$ blue if $r(e) = H$ and red if $r(e) = V$. Furthermore, in the following we refer to edges $e$ with $r(e) = H$ as \emph{horizontal} or \emph{blue} edges, and to edges $e$ with $r(e) = V$ as \emph{vertical} or \emph{red} edges, depending on the context.

To ensure that the layout guide can determine whether a particular layout of the map elements can fit the desired container, we impose additional structure on the underlying graph. Here we use the existing concept of \emph{transversal structures} as introduced by Fusy~\cite{fusy2009transversal}, or equivalently, \emph{rectangular edge labelings} as introduced by Kant and He~\cite{kant1997regular}. Specifically, we add four boundary nodes N (North), E (East), S (South), and W (West) to the layout guide that essentially correspond to the top, right, bottom, and left side of the container, respectively. Note that these nodes do not correspond to actual map elements and hence do not require any attributes. In addition, the edges between two boundary nodes do not have a direction or a label. We then require that the edges of the layout guide form a triangulation where the four boundary nodes form the outer boundary of the triangulation (see \cref{fig:transfersal_structure_example}). In particular, this also means that the layout guide has a clearly defined combinatorial embedding, that is, the order of edges around every node is specified. The edge labels of the layout guide then form a transversal edge-partition if the following two properties hold:
\begin{enumerate}
    \item In clockwise order around every non-boundary node of $\layout$, we encounter the following edge types (at least one each): outgoing vertical (OV) edges, outgoing horizontal (OH) edges, incoming vertical (IV) edges, and incoming horizontal (IH) edges.
    \item All edges incident to the boundary nodes N, E, S, and W (excluding between two boundary nodes) are, respectively, incoming vertical, incoming horizontal, outgoing vertical, and outgoing horizontal.
\end{enumerate}
Such a transversal structure exists only if the triangulation does not contain separating triangles~\cite{fusy2009transversal,kant1997regular}. Such triangulations are also referred to as \emph{irreducible triangulations}. We therefore require the edges of the layout guide to form an irreducible triangulation; this requirement can always be satisfied~\cite{bkk-traingulating-97}.

\begin{figure}[t]
    \centering
    \includegraphics{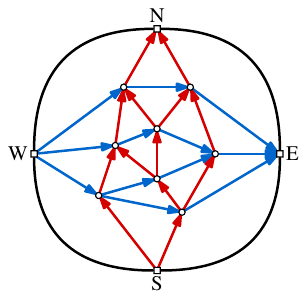}
    \caption{
    Transversal structure on an irreducible triangulation: directed edges are indicated via arrows, and the labels vertical $V$ and horizontal $H$ are indicated using red and blue colors, respectively. Edges between the boundary nodes, W (West), E (East), S (South), and N (North), do not have a direction or label.}
    \label{fig:transfersal_structure_example}
\end{figure}

As a result, the layout guide prescribes the relative positions between map elements: for every map element, the layout guide provides information on which elements should be above, to the right, below, and to the left of it. This allows us to determine the desired width and height of the full map, according to a given layout guide. Let $E_H = \{e \in E\mid r(e) = H\}$ be the horizontal edges and $E_V = \{e \in E\mid r(e) = V\}$ the vertical edges. Both sets of edges form a so-called \emph{bipolar orientation}: a directed acyclic graph with exactly one source (W and S for $E_H$ and $E_V$, respectively) and exactly one sink (E and N for $E_H$ and $E_V$, respectively). Now let $P^{\rightarrow}$ be the set of directed paths from W to E in $E_H$ and let $P^{\uparrow}$ be the set of directed paths from S to N in $E_V$. We can define the desired width and height of a layout guide $\layout$ as follows:
\begin{align}
w(\layout) &= \max_{p \in P^{\rightarrow}} \sum_{v \in p} w(v) \label{eq:widthlayout}\\
h(\layout) &= \max_{p \in P^{\uparrow}} \sum_{v \in p} h(v) \label{eq:heightlayout}
\end{align}
In words, the desired width of a layout guide is the maximum sum of the widths of map elements that are supposed to be arranged horizontally. Similarly, the desired height of a layout guide is the maximum sum of the heights of map elements that are supposed to be arranged vertically. Using $w(\layout)$ and $h(\layout)$ we can easily check if a layout guide is suitable for a given container, as $w(\layout)$ and $h(\layout)$ should not exceed the width and height of the container, respectively. If $w(\layout)$ ($h(\layout)$) exceeds the width (height) of the container, then we refer to a path $p^* \in P^{\rightarrow}$ ($p^* \in P^{\uparrow}$) that induces $w(\layout)$ ($h(\layout)$) as a \emph{critical path}, and to the edges of $p^*$ as \emph{critical edges}. The critical paths indicate where the layout guide should be adapted to ensure that it fits a given container. In the following, we describe how we can compute a suitable layout guide for a given container.

\section{The Map Arranger}
\label{sec:map-arranger}

The core of our algorithmic framework is the \emph{map arranger}. The purpose of the map arranger is to compute a suitable layout guide for any given container. To offer the map designer some control over the layout of map elements and to ensure stability and consistency, the map arranger should be initialized with some information on the visual requirements and cartographic context of the map elements. Therefore, the map arranger is technically a data structure that supports the single operation that takes a container (specifically, its dimensions) as input and returns a suitable layout guide for the given container. In this section, we first describe how the map arranger is initialized and then describe how a layout guide is constructed for a given container. 

\mypar{Initialization.}
To initialize the map arranger, we require a reference layout $\refLayout$ and two extremal total orders $\leq_H$ and $\leq_V$ as input. The reference layout $\refLayout$ takes the form of a layout guide for which there are no constraints due to the container. Thus, the labeled edges of $\refLayout$ capture the relative positions of the map elements in an ideal scenario without space constraints for the resulting map. This layout will function as the starting point for the construction of a layout guide for a given container. The only difference between $\refLayout$ and a layout guide is that $\refLayout$ typically does not explicitly store the desired width and height for each map element (as vertex attributes), but instead stores the desired shape (usually in terms of relative area and aspect ratio). This allows us to adapt the width and height requirements of map elements to the area of the container. 

The extremal orders $\leq_H$ and $\leq_V$ capture the desired order of the map elements horizontally and vertically, respectively, when the aspect ratio of the container is so extreme that the map elements must be arranged linearly. Note that $\leq_H$ and $\leq_V$ can simply be provided as a permutation of the map elements in our implementation, but we use the mathematical notation here for ease of exposition. These orders are used to ensure consistency among the layout guides computed for the different containers: they steer the choices made when layout guides are modified. 
%Therefore, we additionally require that the reference layout $\refLayout$ is consistent with the given extremal orders. Specifically, if $e = (u, v)$ is a (directed) edge of $\refLayout$, then $u \leq_{r(e)} v$ must hold. We will also require the same property for all layout guides we construct. 

Our algorithmic framework does not specify how the reference layout and extremal orders are provided, nor whether it is computed automatically or constructed manually. Obtaining suitable extremal orders should be straightforward, as it simply amounts to ordering the map elements, but obtaining a valid reference layout is less straightforward. If the map elements correspond to geographic regions, then the dual graph (connecting two map elements if the corresponding regions share a border) is a good starting point. However, the resulting graph may contain separating triangles. In that case, we first break the separating triangles by removing some edges. Afterwards, we can add the boundary nodes N, E, S, and W and use the result by Biedl \emph{et al.}~\cite{bkk-traingulating-97} to triangulate the resulting graph without introducing separating triangles. The obtained triangulation is irreducible, and hence there must exist a valid transversal edge-partition for this triangulation, which can be computed efficiently~\cite{kant1997regular,fusy2009transversal}. 
%However, this transversal edge-partition may not be compatible with all possible extremal orders. We therefore recommend to first construct a reference layout $\refLayout$. The bipolar orientations of $\refLayout$ induce horizontal and vertical partial orders. The extremal orders $\leq_H$ and $\leq_V$ must then be chosen to extend these partial orders.

\mypar{Computing a layout guide.}
Now assume we are given a container $\container$ as input to the map arranger. Let $w(\container)$ and $h(\container)$ be the width and height of $\container$, and let $A(\container)$ be its area. We start by initializing the layout guide $\layout$ as the reference layout $\refLayout$. However, instead of the shape requirements stored in $\refLayout$ for each map element, we first compute the desired width and height of each map element based on the given container $\container$, and store that in $\layout$. In the typical scenario where we store the relative area and aspect ratio in $\refLayout$ for each map element, we can simply multiply the relative area by $A(\container)$ to obtain the desired area of the map element, and then extract the width and height via the aspect ratio. However, other computations can be defined here, as long as they result in a desired width and height for each map element.

Next, we can compute the desired width $w(\layout)$ and height $h(\layout)$ of the layout guide $\layout$, as defined in Equations~\ref{eq:widthlayout} and~\ref{eq:heightlayout}. As the bipolar orientations of $\layout$ are directed acyclic graphs, these values can easily be computed in $O(|V| + |E|)$ time using standard algorithms for computing the longest path in directed acyclic graphs (by processing the nodes in topological order). As the layout guide is a triangulation, the running time is also linear in the number of map elements. 

We then compare the aspect ratio of $\layout$ with the aspect ratio of the container $\container$ and determine if we should decrease either the width or height of $\layout$. In the following, we assume without loss of generality that we need to decrease the height of $\layout$, as decreasing the width is symmetric. To decrease the height of $\layout$ we repeatedly choose a critical edge $e^*$ on one of the vertical critical paths of $\layout$. In Section~\ref{sec:eliminate-edge} we describe how we can eliminate this critical edge by either changing its label to $H$ or removing $e^*$ from $\layout$. We repeat this process until the height $h(\layout)$ falls below $h(\container)$, or if the horizontal edges of $\layout$ induce a linear order of the map elements. The resulting layout guide $\layout$ is then returned as output.

For the sake of consistency/stability of layout guides, it is important that the critical edge to be eliminated is always chosen deterministically. In Section~\ref{sec:critical-edge-heuristics} we discuss several different heuristics for choosing the critical edge. Then, by resolving critical edges one by one, always starting from the reference layout, the layout guides of containers with a similar aspect ratio always differ by only a few critical edge elimination operations, which are designed to make only local changes to the layout guide. This ensures that the layout guides are consistent and stable. However, this approach comes at the cost of efficiency, as we start from $\refLayout$ for every given container. To remedy this, observe that the layout guide depends only on the aspect ratio of $\container$. We can thus simply store all intermediate layout guides after every critical edge elimination between the reference layout and the extremal order, which can be precomputed. The map arranger then simply needs to find the correct layout guide based on a binary search on the aspect ratio of $\container$. Although storing all intermediate layout guides would require significantly more memory for the map arranger, this amount of memory can be reduced considerably by using something akin to a version control system. As this space-time tradeoff is not the focus of this paper, we omit the details of such an implementation.

\subsection{Eliminating Critical Edges}\label{sec:eliminate-edge}
Let $e^*$ be the critical edge that we aim to eliminate, and assume without loss of generality that $r(e^*) = V$. To achieve this, we need to change the label of $e^*$ to $H$ or remove $e^*$ from the layout guide $\layout$. Although we can easily change the label, the main challenge lies in re-establishing the properties of a transversal edge-partition, which is required for the layout guide. We modify $\layout$ using the following three operations on an edge $e = (u,v)$ (see~\cref{fig:operations}):
\begin{description}
    \item[Relabel:] Change the label $r(e)$ from $H$ to $V$ or vice versa;
    \item[Redirect:] Replace $e$ with the edge $e' = (v,u)$ where $r(e') = r(e)$;
    \item[Diagonal flip:] Since $\layout$ is a triangulation, $e$ is adjacent to exactly two triangles $(x, u, v)$ and $(y, v, u)$. Replace $e$ by the edge $e' = (x, y)$ or $e'' = (y, x)$ where $r(e') = r(e'') = r(e)$. Based on the change in direction from $e$ to either $e'$ or $e''$, we call the diagonal flip clockwise (CW) or counterclockwise (CCW).
\end{description}
\begin{figure}
    \centering
    \includegraphics[page=2]{figs/collapse_prelims.pdf}
    \caption{
    Three operations applied to edge $e$, shown at the top. From left to right: relabeling, redirection (which results in an invalid transversal edge-partition), and a clockwise diagonal flip. Dashed lines indicate edges connecting nodes not on $f$.}
    \label{fig:operations}
\end{figure}
Before we describe how to modify $\layout$ to eliminate a critical edge, we first introduce some shorthand notation. Recall that a transversal edge-partition requires the property that, in clockwise order around every non-special node of $\layout$, we have a specific sequence of incoming/outgoing horizontal/vertical edges. We use OV, OH, IV, and IH, to refer to a maximal sequence of outgoing vertical, outgoing horizontal, incoming vertical, and incoming horizontal edges, respectively. An edge sequence of a non-special node is \emph{valid} if it has the form OV $\rightarrow$ OH $\rightarrow$ IV $\rightarrow$ IH (or some cyclic rotation thereof) and is \emph{invalid} otherwise.

\mypar{Collapsing faces.} It is non-trivial to recover a transversal edge-partition after removing or relabeling $e^*$. We therefore use a structural approach that maintains a valid layout guide and must eventually remove or relabel $e^*$. Consider the (blue) faces induced by the blue edges $E_H$ of $\layout$ (see \cref{fig:blueFace}). The critical edge $e^*$ must lie in one of those blue faces. Let $s$ and $t$ be the source and sink nodes of this blue face, respectively. The nodes $s$ and $t$ partition the boundary of the blue face into a lower and upper directed path from $s$ to $t$. The idea is now to start ``merging'' the lower and upper paths, starting from either $s$ or $t$, in a way that resembles the merge step of a merge sort. Intuitively, you can think of this operation as a zipping operation, in which we close the zipper (collapse the face) from $s$ to $t$ or vice versa. The critical edge $e^*$ will be relabeled or eliminated before the face is fully collapsed. 

\begin{figure}[t]
    \centering
    \includegraphics[page=1]{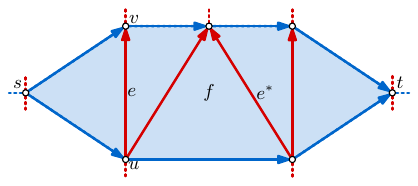}
    \caption{
    The blue face $f$ (shaded) with source node $s$ and target node $t$. Critical edge $e^*$ lies in $f$ and edge $e = (u, v)$ is the leftmost edge in $f$.}
    \label{fig:blueFace}
\end{figure}

Thus, to describe how to eliminate $e^*$, it suffices to describe how to merge the first edges of the blue face. Without loss of generality, we assume that we collapse the face from $s$ to $t$, as the other direction works symmetrically. Let $(s, u)$ and $(s, v)$ be the first lower and upper edges of the blue face, respectively, where the edge $e = (u, v)$ must be red. We show how to relabel $e$ to a blue edge. Furthermore, we use the extremal order $\leq_H$ to determine the direction of the edge between $u$ and $v$, ensuring that $u$ and $v$ are ordered horizontally as specified by $\leq_H$. We assume without loss of generality that $u \leq_H v$; the other case is symmetric by vertically flipping the layout guide.

We simply start by relabeling $e$ to a blue edge. This operation may invalidate the edge sequences around $u$ and $v$. If $e$ was not the only IV edge incident to $v$, then the edge sequence of $v$ remains valid; otherwise, we obtain the invalid edge sequence OV $\rightarrow$ OH $\rightarrow$ IH around $v$. For the edge sequence around $u$ there are two cases: (1) if $e$ was the only OV edge incident to $u$, then we obtain the invalid sequence OH $\rightarrow$ IV $\rightarrow$ IH; (2) otherwise, we obtain the invalid sequence OV $\rightarrow$ OH $\rightarrow$ IV $\rightarrow$ IH $\rightarrow$ OH. In both cases we can use the same approach to correct the edge sequence of $u$, and this approach will also correct the edge sequence of $v$ if it was invalid. However, we have to consider two cases based on the number of vertices on the lower path of the blue face. 

\begin{figure*}[t]
    \centering
    \includegraphics[page=5]{figs/collapsing_blue_face_cases.pdf} \\
    \vfill
    \includegraphics[page=3]{figs/collapsing_blue_face_cases.pdf} 
    \caption{
    Illustration of the operation sequence to merge the leftmost red edge $e$ in a blue face $f$. Top: Case 1, where $u$ has a neighbor $w \neq s, t$ on the lower blue path. Bottom: Case 2, where $u$ is the only node on the lower blue. }
    \label{fig:merge_cases_u_first}
\end{figure*}

\mypar{Case 1 ($u$ has a neighbor $w \neq s,t$ on the lower path)}

\noindent We first diagonally flip (in CCW direction) all OV edges incident to $u$ in CCW order. This operation ensures that the edge sequence around $u$ becomes OH $\rightarrow$ IV $\rightarrow$ IH, and that the edge sequence around $v$ becomes valid (see~\cref{fig:merge_cases_u_first} top). Furthermore, note that we insert OV edges around $w$, but because $w$ must have already had an OV edge, the edge sequence around $w$ remains valid.

Afterwards, we still need to introduce an OV edge around $u$. This is achieved by diagonally flipping the IH edges around $v$ that are CCW from $e$ (excluding $e$) in CCW order. These edges are diagonally flipped in CW direction, except for the last edge, which is diagonally flipped in CCW direction and relabeled to red. Note that there must be at least one IH edge (other than $e$) incident to $v$, since the edge $(s, v)$ is one of those. Furthermore, it is easy to see that none of the edge sequences around other involved nodes are invalidated. 

\mypar{Case 2 ($u$ is the only node on the lower path)} 

\noindent In this case, we may assume that $v$ is the only node on the upper path. If not, then let $v'$ be the last node on the upper path such that $(v', t)$ is in $\layout$. Since $u \leq_H v$, we must have $u \leq_H v'$. Then we can simply merge the paths starting from $t$ using Case 1 (imagine rotating $\layout$ by $180$ degrees) until $v$ is the only node on the upper path (see~\cref{fig:merge_cases_u_first}~bottom).

In this case, the edge sequence around $u$ must be OH $\rightarrow$ IV $\rightarrow$ IH, and the edge sequence around $v$ must be OV $\rightarrow$ OH $\rightarrow$ IH. We now flip the IH edges around $v$ that are CCW from $e$ (excluding $e$) in CCW order. These edges are diagonally flipped in CW direction, except for the last edge, which is diagonally flipped in CCW direction and relabeled to red. This operation inserts an OV edge around $u$, validating the edge sequence around $u$. Similarly, we flip the OH edges around $u$ that are CW from $e$ (excluding $e$) in CW order. These edges are diagonally flipped in CW direction, except for the last edge, which is diagonally flipped in CCW direction and relabeled to red. This operation inserts an IV edge around $v$, validating the edge sequence around $v$. Furthermore, it is easy to see that none of the edge sequences around other involved nodes are invalidated. 

\smallskip

We continue merging the blue face until the critical edge $e^*$ is relabeled or removed and we have re-established the properties of a transversal edge-partition. Below we prove that the above process must eventually terminate in a layout guide corresponding to a linear horizontal order.

\begin{theorem}
The critical edge elimination procedure described above will reach a layout guide corresponding to a linear horizontal order in a finite number of steps.
\end{theorem}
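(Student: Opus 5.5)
The proof rests on a potential function that strictly increases with every completed merge step and is bounded above, together with a check that a merge step is available until the horizontal edges induce a linear order. For a layout guide $\layout$ with $n$ map elements, let $\Phi(\layout) = |E_H|$ be the number of blue edges. Each individual operation in Case 1 and Case 2 preserves the number of edges, since relabel, redirect and diagonal flip all replace one edge by one edge and the triangulation has a fixed edge count. So it suffices to count, for one merge step (relabeling the leftmost red edge $e=(u,v)$ of the blue face $f$), how many edges change colour. In Case 1 the net effect is: $e$ turns blue, some OV edges of $u$ are flipped but stay red, the IH edges of $v$ are flipped and stay blue, except the last one, which turns red. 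Read literally this gives net change $0$. So $|E_H|$ alone will not work, and I would instead use a lexicographic potential.

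The potential I would use has three components:
\begin{enumerate}
\item the number of blue faces, or equivalently the number of internal faces of the bipolar orientation $E_H$;
\item the number of nodes strictly inside the current face $f$ being collapsed;
\item the combined length of the upper and lower boundary paths of $f$.
\end{enumerate}

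The key steps are as follows.
\begin{enumerate}
\item I would check from the case descriptions that after a merge step the blue edges $(s,u)$ and $(u,v)$ (say) now bound a blue triangle region. The face $f$ is thereby replaced by a face whose boundary has one fewer node on the lower or upper path, because $u$ is absorbed into the $s$-side and the new "first" edges start further along.
\item Each merge step therefore strictly decreases the sum of the upper and lower path lengths of $f$. The flipped edges lie outside $f$, in the red structure between blue edges, so they do not add nodes to $f$'s boundary.
\item When the lower path of $f$ becomes empty (both paths meet at $t$), $f$ has been fully collapsed. The critical edge $e^*$, which lay inside $f$, must have been relabeled or removed by then. Then $f$ disappears as a face of $E_H$, so the number of blue faces drops by one.
\item A bipolar orientation on $n+2$ nodes has finitely many faces, at most $|E|$. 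Therefore after finitely many critical edge eliminations there are no bounded blue faces containing red edges. At that point every red edge has been eliminated: the only remaining red structure is the boundary edges from S and to N.
\item In this situation the blue edges form a single W--E path through all map elements, i.e.\ a linear horizontal order. The extremal order $\leq_H$, used to orient each new blue edge, guarantees acyclicity and agreement with $\leq_H$.
\end{enumerate}

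I also need that each merge step itself is finite. Case 1 flips finitely many edges around $u$ and $v$ (bounded by their degrees). Case 2 first invokes finitely many Case~1 merges from the $t$ side, which shorten the upper path, and then finitely many flips. Termination of the inner reduction in Case 2 follows again from the decreasing path length.

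The main obstacle is step 2: verifying rigorously that the diagonal flips in Cases 1 and 2 never create new blue faces elsewhere or re-lengthen $f$'s boundary. The last flipped edge is relabeled red, and one must confirm that this red edge lies in a blue face adjacent to $f$ that was already counted, not a newly split one. I would first try to show that the flips only reshuffle edges within the two triangles fans around $u$ and $v$. Combined with Euler's formula applied separately to $E_H$, this gives that the number of blue faces equals $|E_H| - n + c$ for a fixed constant $c$, where the faces of the blue subgraph are counted excluding the outer face. This formula would let me track faces via edge counts. If that bookkeeping fails, I would fall back to the stronger potential of the number of pairs $(x,y)$ of map elements comparable in the blue partial order. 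Each merge makes $u,v$ comparable and never destroys comparability, since flips replace a blue edge by a blue path-compatible edge. That potential is bounded by $\binom{n}{2}$ and reaches its maximum exactly at a linear order.
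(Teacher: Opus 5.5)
\textbf{The face potential in your main argument does not work.} By Euler's formula, the blue plane graph spans all $n$ map elements together with W and E, since every inner node has IH and OH edges. So it has exactly $|E_H|-n$ faces, as your own formula $|E_H|-n+c$ says. Combined with your computation that a Case~1 merge leaves $|E_H|$ unchanged, this shows that Case~1 merges never change the number of blue faces. Only the final Case~2 merge of a fully zipped face removes one.

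However, the procedure stops zipping $f$ as soon as $e^*$ has been relabeled or flipped away, so $f$ is typically left only partially collapsed. A Case~1 merge also enlarges a neighbouring face. The flips move the IH fan of $v$ over to $u$: $(s,v)$ becomes $(x_1,u)$, $(x_i,v)$ becomes $(x_{i+1},u)$, and the last one becomes the red edge out of $u$. As a result, $u$ is inserted into the lower boundary of the blue face above $v$. This makes your step~1 wrong, since no blue triangle $s,u,v$ survives, and it also refutes the claim that the flips touch only red structure. Once the next critical edge lies in such an enlarged face, your third component increases while the first stays fixed, so the lexicographic value is not monotone. Your second component is identically zero, because blue faces contain no nodes in their interior. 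Steps~3--4 therefore do not establish termination.

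\textbf{Your fallback is the correct proof and is exactly the paper's argument}: count the pairs of nodes joined by a directed blue path. But ``flips replace a blue edge by a blue path-compatible edge'' is not why it works. The last flipped edge of each fan becomes red, and the other flipped blue edges are replaced by edges ending at $u$, not by anything parallel to them. What you need is the order of the flips:
\begin{itemize}
\item $v$ loses an IH edge $(w,v)$ only after $(w,u)$ exists (for the first such edge, $(s,u)$).
\item In Case~2, $u$ loses an OH edge $(u,w)$ only after $(v,w)$ exists.
\end{itemize}
Together with the new edge $(u,v)$, every lost blue edge is bypassed by a surviving blue path $w\to u\to v$ or $u\to v\to w$. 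Hence no comparability is lost, while the previously incomparable pair $u,v$ becomes comparable. The same reasoning covers the Case~1 merges from $t$ that prepare Case~2. The potential is bounded, and all pairs are comparable exactly at a linear order, which gives the theorem.

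Finally, drop the claim in step~5 that the limit agrees with $\leq_H$. The order $\leq_H$ only fixes the direction of each new edge $(u,v)$, and the paper explicitly notes that the final linear order need not be $\leq_H$.
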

\begin{proof}
To show that the procedure above always makes progress towards a linear horizontal order, we show that the number of pairs of nodes $x, y$ for which there is a directed horizontal path from $x$ to $y$ in $\layout$ (or vice versa) always increases. Since this number of pairs is maximized in a linear horizontal order, the procedure must eventually reach a layout guide corresponding to a linear horizontal order.

Let $(s, u)$ and $(s, v)$ be the first edges of the blue face that contains the critical red edge $e^*$. We are merely interested in how the horizontal (blue) edges change. In the procedure described above, $v$ may lose some IH edges $(w, v)$, but only after the IH edge $(w, u)$ was already introduced. Since we add the horizontal edge $(u, v)$, a directed horizontal path between $w$ and $v$ is preserved. Similarly, $u$ may lose some OH edges $(u, w)$, but only after the OH edge $(v, w)$ was already introduced. Hence, a directed horizontal path between $u$ and $w$ is again preserved. The pair $u, v$ was previously not connected by a directed horizontal path, but this connection is introduced via the edge $(u, v)$. Thus, the number of pairs of nodes connected by a directed horizontal path in $\layout$ must strictly increase, which completes the proof.
\end{proof}

\noindent Note that we do not claim that the final linear horizontal order corresponds to $\leq_H$. Because we require the layout guides to be stable, we cannot guarantee that the extremal order $\leq_H$ is obtained as the final linear horizontal order of map elements, if the reference layout $\refLayout$ and extremal order $\leq_H$ are not very compatible. However, for suitable choices of $\refLayout$ and $\leq_H$, the layout guides usually converge to a linear order close to $\leq_H$. In fact, in our use cases we always obtain the extremal order exactly in the limit (see \cref{sec:instantiations}).

\subsection{Merge Heuristics}\label{sec:critical-edge-heuristics}
In this section we describe three possible heuristics for choosing the critical edge $e^*$ from a critical (vertical) path $p^* \in P^\uparrow$, and discuss advantages and drawbacks of each. We assume that there is only one critical path $p^*$. Note that, in degenerate cases, there may be multiple critical paths. In that case, it may be beneficial to choose critical edges that are contained in multiple critical paths. However, we do not consider this situation here. 

When choosing a critical edge, we are concerned with three aspects: (1) the amount of change (either in total area or number of involved elements) required to eliminate the critical edge, (2) the reduction in height achieved by eliminating the critical edge, and (3) the balance in the distribution of map elements. Our heuristics are designed to incorporate these different aspects. Observe that we can eliminate only edges between non-boundary nodes, so edges in $p^*$ adjacent to boundary nodes are not considered. This also implies that we should only consider critical paths containing at least two non-boundary nodes; we therefore ignore all other paths in the computation of $w(\layout)$ and $h(\layout)$, ensuring that $p^*$ contains at least one critical edge between two non-boundary nodes.
Finally, note that we can also choose in which direction we can collapse a face (from $s$ or from $t$). This choice is only relevant in our first heuristic; therefore, our other heuristics also optimize this choice based on the first heuristic. 

\mypar{Min-change.} This heuristic selects the critical edge $e^* \in p^*$ that can be eliminated in the fewest number of merges. To establish that number for a specific edge $e^*$, we consider the blue face in which $e^*$ is contained, and count the number of red edges between $e^*$ and either $s$ or $t$ (taking the minimum). In case of ties, we can break ties arbitrarily. However, recall that we must always break ties consistently/deterministically, for otherwise the resulting layout guides are not stable and consistent.

The advantage of this heuristic is that it leads to the smallest number of combinatorial changes to $\layout$. However, it does not consider the size of the map elements, and hence the visual impact of the changes. For that reason we can also consider the total area of the affected map elements instead, and minimize that. If we also want to minimize the number of combinatorial changes, then we can choose a combined measure. One option is to simply minimize the number of merges, and break ties by considering the total area of the affected map elements.

\mypar{Max-height.} This heuristic selects $e^*$ as one of the two edges adjacent to the node $v$ with maximum $h(v)$. Unless one of these edges is incident to a special node, this gives us two possible choices $(u, v)$ and $(v, w)$ for $e^*$. In that case, we choose $e^* = (u, v)$ if $h(u) > h(w)$, and $e^* = (v, w)$ otherwise.

The advantage of this heuristic is that it is likely to have the biggest impact on the height of the resulting layout guide, potentially resulting in fewer critical edges that need to be eliminated. However, this heuristic will often make changes to large map elements, which can have a high visual impact on the layout.

\mypar{Min-width.} This heuristic aims at obtaining a balanced distribution of the map elements over the resulting map. Note that our algorithm focuses on reducing the height of the layout guide, but does not control the width of the layout guide. By eliminating a critical edge $e^* = (u, v)$, we essentially place the map elements corresponding to $u$ and $v$ horizontally next to each other. Therefore, several horizontal paths in $P^\rightarrow$ become longer. This heuristic selects the critical edge $e^*$ that minimizes the width $w(p)$ of the widest horizontal path $p \in P^\rightarrow$ that contains $u$ and $v$ in the resulting layout $\layout$, where $w(p) = \sum_{v \in p} w(v)$.

The advantage of this heuristic is that it tries to minimize the width of the layout, which effectively increases the area that can be used by the map elements. A potential drawback is that it may perform more changes than necessary to the layout.

\section{Use Cases}
\label{sec:instantiations}

Layout guides capture the essential aspects of thematic maps. We hence expect that most, if not all, thematic mapping algorithms can be adapted to follow such a guide and produce a thematic map that closely matches the relative positions and visual requirements of the statistical map elements encoded therein. This adaptation is particularly straightforward for cartograms; we demonstrate the necessary steps for both rectangular and Demers cartograms. We do so using three datasets, which are vertical (the regions of England\footnote{https://en.wikipedia.org/wiki/Regions\_of\_England}), horizontal (the contiguous United States\footnote{https://en.wikipedia.org/wiki/List\_of\_U.S.\_states\_and\_territories\_by\_area}), and square (the departments of France\footnote{https://en.wikipedia.org/wiki/List\_of\_French\_departments\_by\_population}). In each case, the thematic value assigned to a region corresponds to its (land) area.

We describe how to initialize the map arranger for the three datasets, followed by describing our thematic mapping algorithms for generating rectangular cartograms and Demers cartograms from layout guides. Our code is openly available.\footnote{https://github.com/cartocrow/responsive\_map\_arranger/tree/ArXiv} 

To offer some more flexibility in using the map arranger, we first introduce the notion of \emph{slack}. By design, the map arranger is very rigid when determining whether a layout guide fits in a container. However, in practice, depending on the type of thematic map, we may want to allow a slight deformation in the shape of map elements in favor of maintaining their relative positions. We therefore introduce a slack parameter $s \geq 0$ that controls this tradeoff. Specifically, if $\container$ is the given container and $s$ is the slack parameter, then the map arranger obtains $\layout$ by (starting from $\refLayout$) eliminating vertical (or horizontal) critical edges until $h(\layout) < (1+s) h(\container)$ (or $w(\layout) < (1+s) w(\container)$). The setting $s = 0$ corresponds to the original rules of the map arranger.

%For this, we first introduce the notion of \emph{slack}. Without slack, the map arranger always merges critical edges of a reference layout $\refLayout$ until there is no critical path left. However, often, in responsive visualization, it is desirable to squish map elements a bit before reordering them. We can therefore only resolve critical paths of $\refLayout$ once they exceed the container dimensions by a certain threshold. The slack is a percentage of the container $\container$ dimensions; for a given slack $s$, vertical critical paths are merged until the resulting layout guide $\layout$ its height is $h(\layout) <h(\container) + s\cdot h(\container)$ and horizontal critical paths are merged until $w(\layout) < w(\container) + c\cdot w(\container)$.

\subsection{Initializing the Map Arranger}
To initialize the map arranger, a reference layout $\refLayout$ must be provided together with a horizontal and vertical extremal order. Our general approach is as follows. Since we will construct cartograms for geographic regions as map elements, we will use the geographic regions in a general reference map as the basis to determine $\refLayout$, where each node in $\refLayout$ corresponds to a geographic region. Each map element is assigned a desired aspect ratio and relative area, which is derived from the bounding box of the corresponding geographic region (the relative area is with respect to the total area of all map elements). To better preserve the original outline of the maps in the resulting cartograms, we sometimes introduce additional \emph{sea regions} \cite{kreveld2007rectangular} as map elements. These sea regions help maintain the overall shape of the map and support the construction of an appropriate triangulation for $\refLayout$. As explained in Section~\ref{sec:map-arranger}, we also need to add the four required boundary nodes W (West), E (East), S (South), and N (North) to $\refLayout$, which correspond to the left, right, bottom, and top side of the map, respectively. We now discuss the remaining details for constructing $\refLayout$ for each of the datasets separately.

\mypar{Regions of England.} To construct $\refLayout$ for this map, we first add some sea regions to the map. Specifically, to preserve the characteristic L-shapes on the left and top-right parts of the map, one sea region is added in each of these areas. Additionally, two sea regions are introduced around London to facilitate the construction of the irreducible triangulation (see \cref{fig:uk_initialization} middle). Next, we can obtain an irreducible triangulation by taking the dual graph of the geographic regions (two nodes are connected if their corresponding regions share a boundary). Finally, we construct a transversal edge-partition on the triangulation, which can then be used as $\refLayout$ (see \cref{fig:uk_initialization}, right).
%To initialize $\refLayout$, we first add a vertex that corresponds to each region (see~\cref{fig:uk_initialization} left). We then add the four outer vertices W, E, S, and N. To preserve the characteristic L-shapes on the left and top-right parts of the map, one sea region is added in each of these areas. Additionally, two sea regions are introduced around London to facilitate the construction of the irreducible triangulation shown in the middle of \cref{fig:uk_initialization}. 
%Once this triangulation is established, the transversal edge partition is constructed; see \cref{fig:uk_initialization}, right.
\begin{figure*}[t]
    \centering
    \includegraphics{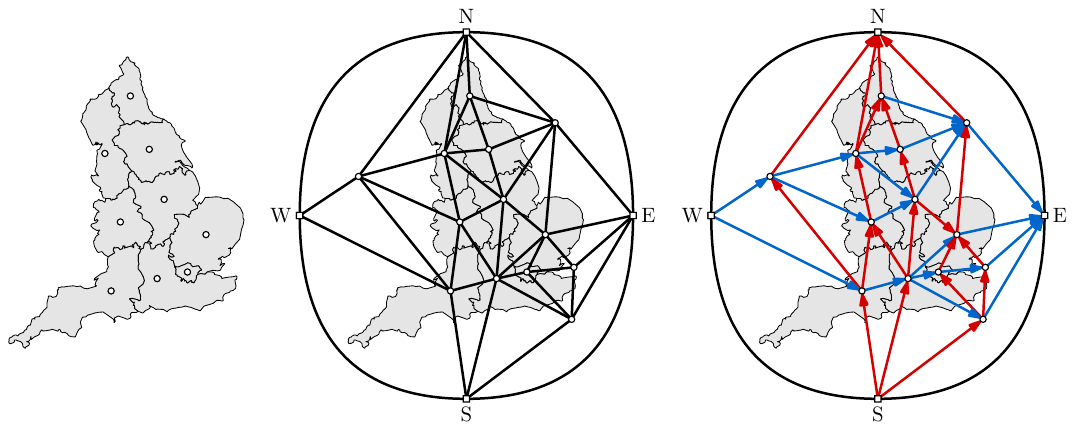}
    \caption{Constructing a reference layout $\refLayout$ for the regions of England. Left: map elements are represented as nodes storing relative area and aspect ratio. Middle: the four boundary nodes (W, E, S, and N) and four additional sea regions are added; the resulting dual graph is an irreducible triangulation. Right: a transversal edge-partition is computed on the triangulated graph.}
    \label{fig:uk_initialization}
\end{figure*}

\mypar{Departments of France.} For the departments of France, we add six sea regions that help preserve the outer shape of France and ensure that the resulting reference layout captures the relative positions of regions on the map. Unfortunately, the dual graph of the resulting regions contains a separating triangle formed by the three regions surrounding Paris. To obtain an irreducible triangulation, we simply merge the Paris region with its three surrounding regions. Finally, we construct a transversal edge-partition on the dual graph (triangulation), which can then be used as $\refLayout$ (see the supplementary material). 
%the initialization of $\refLayout$ slightly differs. Paris, together with the three surrounding regions, creates a separating triangle. Instead of assigning each of these regions a separate node, they are merged into a single region. The area of this merged region is set to the sum of the individual areas. All remaining regions are assigned their own node. Then, the four outer nodes are added, along with six sea regions that help preserve the outer shape of France and ensure that the resulting structure reflects the spatial relations of regions in the map. Afterward, the graph is triangulated based on adjacency, and the transversal edge partition that can be seen in the supplementary material is created.

\mypar{Contiguous United States.} We add eleven sea regions to help preserve the outer shape of the US, and to obtain an irreducible triangulation. The resulting dual graph does not contain separating triangles, but the US states include a point in which four states meet. To obtain a valid triangulation, we must drop one of these adjacencies. Finally, we construct a transversal edge-partition on the resulting triangulation, which can then be used as $\refLayout$ (see the supplementary material).
%The initialization of the US states again includes a node for each region. The four outer nodes are added, in addition to eleven sea regions. As New York is a cut vertex in the adjacency graph of the US states, at least the sea regions surrounding New York are required to be included in order to create an irreducible triangulation. After adding all nodes the graph is triangulated. Note that the US states include a point in which four states border. As a triangulation cannot capture these four adjacencies, one adjacency is not included in the resulting triangulation. Afterward, the transversal edge partition is initialized. The triangulation and transversal edge partition can be seen in the supplementary material.

\mypar{Extremal orders.} To choose a suitable extremal order $\leq_H$ for a specific map, we first consider the horizontal edges in $\refLayout$. We ensure that if there is a horizontal directed path from node $u$ to node $v$, then $u \leq_H v$. Under these constraints, we construct a Hamiltonian path on the triangulation induced by $\refLayout$, excluding the sea regions when possible. The extremal order $\leq_H$ of the land regions then simply corresponds to the order along the Hamiltonian path in which the land regions are encountered (for example, see \cref{fig:uk_rect_output}). We order the sea regions at either end of the order $\leq_H$, with the aim of keeping the land regions contiguous in the resulting rectangular cartogram. The vertical extremal order $\leq_V$ is constructed symmetrically. See the supplementary material for the extremal orders of France and the US.

%For the extremal orders we provide, for all three datasets, a horizontal and vertical extremal order that is compatible with the partial order of the reference layout $\refLayout$ and is a Hamiltonian path in its triangulation. The total order provided for the land regions of England can be seen on the left of \cref{fig:uk_rect_output}. We insert sea regions on the outside of the orders in order to make sure the land regions remain contiguous in the resulting rectangular cartogram. The total orders provided for France and the US are provided as part of the supplementary material.

%Below, we explain how we initialize the reference layout $\refLayout$ for each of the datasets separately.

% \begin{figure}
%     \centering
%     \includegraphics{figs/UK_order_construction.pdf}
%     \caption{ The triangulated graph for the reference layout $\refLayout$ of the regions of England, together with the provided extremal orders. Blue arrows indicate the extremal horizontal order $\leq_H$ and red arrow indicate the extremal vertical order $\leq_V$. The outer vertices are excluded from the triangulation as they are not relevant for the extremal orders.}
%     \label{fig:uk_orders}
% \end{figure}

\subsection{Layout Guide to Rectangular Cartogram}
A \emph{rectangular layout} is a partitioning of a rectangle into smaller rectangles. A rectangular cartogram \cite{kreveld2007rectangular, raisz1934rectangular} is a rectangular layout where each map region is represented by a rectangle, such that the area of the rectangle corresponds to the thematic data value of the map region. 

It is known that a transversal edge-partition of an irreducible triangulation directly corresponds to a rectangular layout where two rectangles share a vertical boundary if there is a horizontal edge between the corresponding nodes, and a horizontal boundary if there is a vertical edge between the corresponding nodes~\cite{fusy2009transversal, kant1997regular}. However, the resulting rectangular layout may not function as a correct rectangular cartogram, as the areas may be incorrect. Nevertheless, for any rectangular layout and assignment of areas to rectangles, there always exists an \emph{order-equivalent} rectangular layout that realizes the given areas~\cite{eppstein2012area}. Specifically, this order-equivalent rectangular layout preserves the horizontal and vertical order among rectangles, but may change direct rectangle adjacencies.  

Thus, to construct a rectangular cartogram with a given layout guide $\layout$, we first construct a rectangular layout from the transversal edge-partition of $\layout$ using the algorithm in \cite{kant1997regular}. Next, we use the hill climbing algorithm in \cite{eppstein2012area} to construct an order-equivalent rectangular layout with the correct areas, which will be the resulting rectangular cartogram.  
%It is known that the dual of a transversal edge partition $\mathcal{T}$ equates to a partition of rectangles such that two rectangles partially share a boundary if and only if their corresponding nodes in $\mathcal{T}$ share an edge \cite{fusy2009transversal, kant1997regular}, but not all such partitions admit a realization of an arbitrary assignment of positive areas. There is however, always a class of rectangular layouts that are \emph{order equivalent} \cite{eppstein2012area}, of which there is one layout that does realize an arbitrary assignment of areas. Two rectangular layouts are order equivalent if they induce the same horizontal and vertical partial orders on the \emph{maximal segments}; a segment is maximal if it is not contained in any other segment of the rectangular layout.
%Using these properties, we can simply construct a rectangular layout from the transversal structure of a layout guide $\layout$, given by the map arranger, using the algorithm in \cite{kant1997regular} and then realize the area assignments via the gradient descent algorithm from \cite{eppstein2012area} to construct a rectangular cartogram.

\mypar{Showcase.} We present the results for rectangular cartograms of the regions of England. Results for France and the US are included in the supplementary material. \cref{fig:uk_rect_output} shows rectangular cartograms generated using layout guides that are produced by the map arranger with the min-width heuristic. It displays snapshots from videos in the supplementary material for seven containers of equal area for various aspect ratios. 

All regions remain clearly visible across all ratios; no region becomes too small or thin to read. For both extreme aspect ratios, the ordering of map elements in the resulting cartograms aligns with the predescribed extremal orders.

Comparing the min-width heuristic with the min-change heuristic shown in the supplementary material, we observe that the min-width heuristic distributes regions more evenly when the container is compressed by 50\%. In contrast, the min-change heuristic already produces a linear order of land regions under horizontal compression. The formation of long paths is even more prevalent in the larger examples shown in the supplementary material.

Another aspect concerns the sea regions. These currently occupy a substantial amount of space under extreme aspect ratios without adding any information or supporting the outer shape. How to address this depends on the use case and should be decided by the thematic mapping algorithm; we discuss this further in \Cref{sec:discussion}.

\begin{figure*}[t]
    \centering
    \includegraphics[page=1]{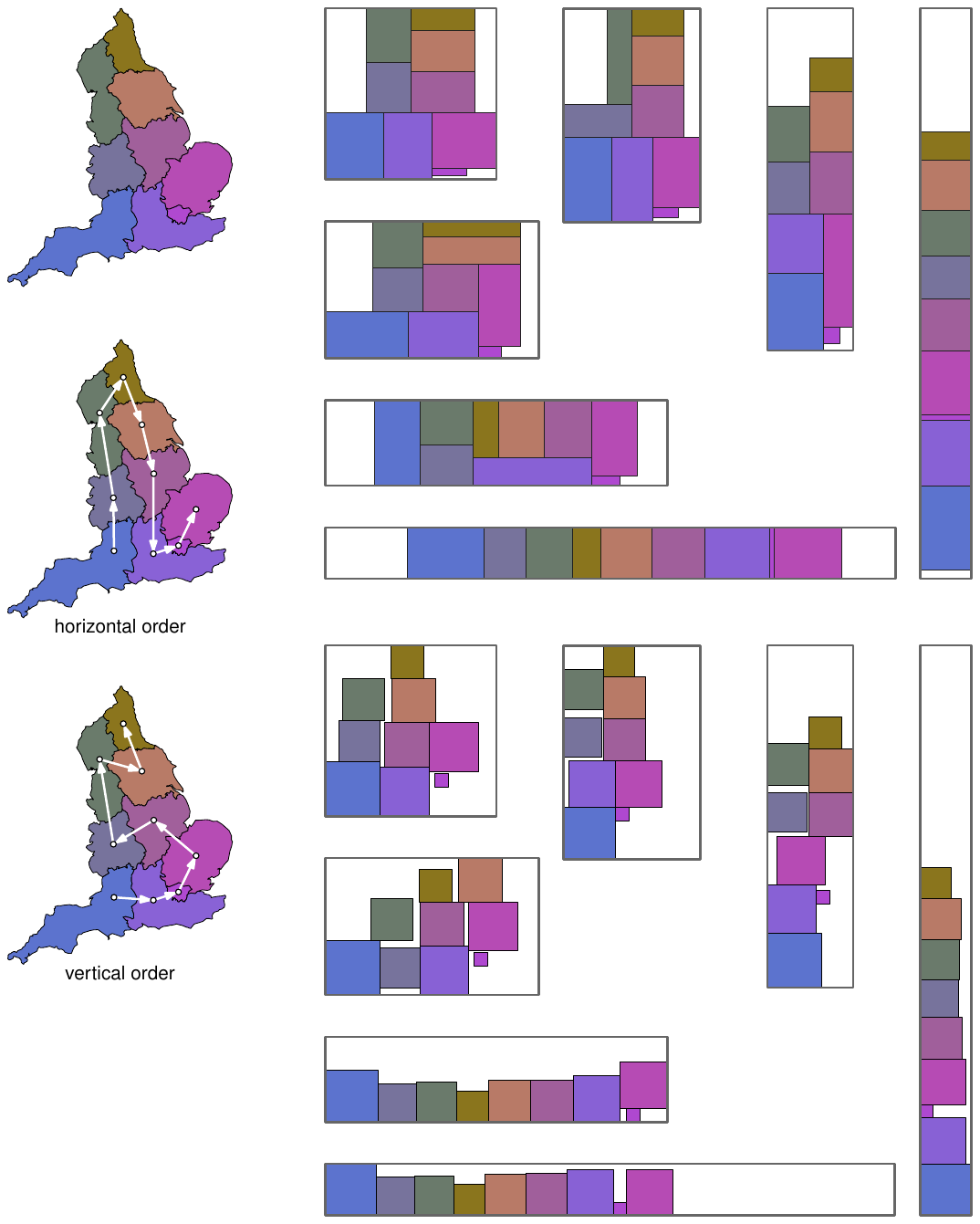}
    \caption{Rectangular and Demers cartograms produced using layout guides from the map arranger using the min-width heuristic (right) of the regions of England (top left). We used a slack of 0.3 for the rectangular cartograms (top), and a slack of 0.2 for the Demers cartograms (bottom). Both cartogram outputs are shown for various aspect ratios: the top-left container has an aspect ratio of 1:1. The other containers are compressed horizontally and vertically by 20\%, 50\%, and 70\% and then stretched in the other direction such that the area of all containers is equal. The extremal horizontal and vertical orders of the land regions are indicated on the left.}
    \label{fig:uk_rect_output}
\end{figure*}

\subsection{Layout Guide to Demers Cartograms}

Demers cartograms represent each region of interest by a square, with its area matching an associated data value of interest; these squares are to be positioned without overlap while maintaining geographic aspects. 
Nickel et al.~\cite{nickelMulticriteriaOptimizationDynamic2022} introduced a linear program to compute Demers cartograms, based on a set of separation constraints (horizontal or vertical), such that any two regions are transitively separated either vertically or horizontally. After heuristically estimating these separation constraints, this linear program then minimizes the Manhattan distance between adjacent regions.
To derive a Demers cartogram from a layout guide, we simplify, adapt and extend this linear program; refer to \iflabelexists{app:demers}{the supplementary material~\ref{app:demers}}{the supplementary material} for details.

First, we simplify the linear program to avoid various secondary objectives that are introduced by Nickel et al.~\cite{nickelMulticriteriaOptimizationDynamic2022}: while we could integrate such considerations also here, we choose to explicitly focus on the layout guide only as a means of determining the eventual cartogram.

Second, we adapt the algorithm to use the labeling of the layout guide to prescribe the separation constraints, instead of heuristically inferring them. However, these constraints are not quite sufficient to ensure that the Demers cartogram is free from overlap. After solving the linear program, we check for overlapping elements: if there are any, we add further separation constraints and repeat.
In our test cases, only a few iterations, if any, were observed.

Third, we extend the linear program to maximize symbol size, before minimizing distances between regions. As the original method uses fixed symbol sizes without a specific container, this is necessary to ensure that the resulting cartogram uses the available space effectively without having to adapt the thematic data. To achieve maximizing symbol sizes, we integrate the scale maximization in the linear program for overlap removal by Meulemans~\cite{meulemansEfficientOptimalOverlap2019} into our algorithm.

\mypar{Showcase.} For Demers cartograms, the results of the min-width heuristic applied to the regions of England are also presented in \cref{fig:uk_rect_output}; results for France and the US are included in the supplementary material. As with the rectangular cartograms, each region remains clearly visible, and the total orders are consistent with the provided extremal orders. Note that the total orders generated by the map arranger do not depend on the thematic mapping algorithm used.

The relative positions of the regions and outer shape of the map is maintained quite well when the map is compressed vertically by 20\% (i.e., to a landscape aspect ratio). However, under more extreme compression (50\%), the vertical space is not utilized optimally: some regions could still be stacked on top of each other. These regions were part of a critical path at less extreme aspect ratios, causing them to merge into a horizontal order earlier.

Comparing the min-width heuristic with the min-change heuristic, shown in the supplementary material, we observe the same uneven distribution issues for the min-change heuristic that were also present in the rectangular cartograms. The distribution for the min-width heuristic remains even across the container. Note again that the partial orders defined by the layout guide do not depend on the thematic mapping algorithm, but solely on the map arranger.

\section{Discussion and Conclusion}
\label{sec:discussion}

We introduced an algorithmic framework to support the responsive construction of thematic maps. Its core element is a map arranger, that adapts a layout guide (relative directions between map elements) to a container based on the desired width and height of the map elements. The map arranger ensures that the layout guides adapt in a stable and consistent manner, ensuring small changes in the guide for small changes to the container and that identical containers result in identical layout guides. To use this framework, one needs to introduce or adapt thematic mapping algorithms to use the layout guide: we demonstrated how to do so for rectangular cartograms and Demers cartograms. Our work allows for various avenues for further research, briefly discussed below.

\mypar{Sea regions \& outer shape.}
We introduced sea regions in our datasets to help shape the resulting thematic maps, particularly the overall outline of the composition of map elements. These regions may thus help with recognizing the geographic regions and location-specific elements. Yet, as the aspect ratio changes, or the container becomes smaller, affording the extra space to such sea regions may not weigh up against the reduced space for the actual map elements. We leave effectively adapting sea regions within the framework to future work, but observe that doing so may depend on geographic context as well as on the thematic map itself. For example, maintaining a sea region to represent the Atlantic Ocean may be important in a world map; but a sea region to maintain more shape to a UK map may reduce in importance as space is constrained more.

\mypar{Other thematic maps.}
Our framework can be used to make responsive versions of many different types of thematic maps, by developing appropriate thematic mapping algorithms. 
Two promising examples are mosaic cartograms and grid maps, which have been increasing in popularity in recent years, particularly for election mapping. Automated generation of both mosaic cartograms~\cite{canoMosaicDrawingsCartograms2015} and grid maps~\cite{meulemansSimplePipelineCoherent2021} yields good results, but existing algorithms would need to be adapted to be compatible with the layout guide.

\mypar{Efficiency.}
Our map arranger is efficient and can easily support real-time adaptation of a layout guide. The main bottleneck in efficiency hence lies with the thematic mapping algorithm itself. In our use cases, these algorithms were also sufficiently fast to enable on-the-fly computation of the map as the container and layout guide change. For other map types, such as mosaic cartograms mentioned above, the algorithm may not be fast enough for real-time updates. Instead, one could precompute the result for several layout guides, observing that there are only a limited number of guides that the map arranger can produce, possibly complemented by a heuristic to fit the precomputed layout to an exact container.

\mypar{Animated transitions.}
Animated transitions help users keep track of changes in visualizations~\cite{chevalierNotsoStaggeringEffectStaggered2014}. As such, animating changes as a thematic map is gradually rearranged for different aspect ratios could be a useful addition for responsive visualization scenarios where users resize the map frequently. Our framework makes discrete changes to the layout guide, but was designed for those changes to be gradual. It is therefore well-suited for the development of thematic maps with animated transitions of moving map elements.

% \subsection{User agency}

% Layout guide identifies when things don't fit and why they need to move --- could be used to inform users about these changes

\mypar{Consistency \& user studies.}
While \textit{consistency} is a common design goal in responsive visualization (see e.g.,~\cite{hoffswellTechniquesFlexibleResponsive2020,wuViSizerVisualizationResizing2013}), it is unclear to what extent existing design guidance on consistent encodings~\cite{quKeepingMultipleViews2018} applies to responsive visualizations. We aim to ensure consistency by maintaining relative positions and gradually adjusting the spatial layout of map elements. Future work should aim to empirically evaluate whether users perceive these changes to be consistent and if it helps them keep track of map elements and locate regions even in a distorted map.

\clearpage

\acknowledgments{
A. Simons and W. Meulemans are (partially) supported by the Dutch Research Council (NWO) under project number VI.Vidi.223.137.}

\bibliographystyle{abbrv-doi-hyperref-narrow}
% %\bibliographystyle{abbrv-doi}
% %\bibliographystyle{abbrv-doi-narrow}

\bibliography{responsive-maps-framework}

\clearpage
%\appendix

\pagestyle{empty}

\begin{figure*}[t]
\begin{minipage}{\textwidth}
\centering
{\sffamily\huge%
Automated Responsive Thematic Mapping with Layout Guides
\vskip 5pt
\LARGE Supplementary Material\\}
\vskip 25pt
{\sffamily\large%
Arjen Simons, Sarah Schöttler, Wouter Meulemans, Kevin Verbeek, and Bettina Speckmann\\
\vspace{25pt}
}
\end{minipage}
\end{figure*}

\setcounter{section}{0}

\section{Linear program for Demers cartograms}\label{app:demers}

Specifically, consider a Layout Guide $\mathcal{L} = (V,E)$ where the desired width and height for every vertex are set to be equal.
Assume our target rectangle have corners $(0,0)$ and $(a,b)$ for some positive $a,b$. We introduce a single variable $S$ to capture scale, variables $X_v, Y_v$ for each vertex $v$ to capture its position, and variables $\Delta X_{u,v}, \Delta Y_{u,v}$ for each edge $(u,v) \in E$ to capture the distance between vertices connected by an edge in the Layout Guide.

Our linear program has the constraints below. The first four ensure that placing a square of with the desired width (and height) at the center of the vertex coordinates keeps the square within the target rectangle. 
The next two constraints ensure that neighboring vertices in $\mathcal{L}$ are positioned such that their squares do not properly intersect, enforcing either a horizontal or vertical separation depending on the edge labeling. 
The final four constraints ensure that, when minimized, the $\Delta$ values indeed capture the desired distance.
\begin{align}
    X_v - \frac{w(v)}{2} \cdot S \geq 0 & ~~~\forall v \in V \\
    X_v + \frac{w(v)}{2} \cdot S \leq a & ~~~\forall v \in V \\
    Y_v - \frac{w(v)}{2} \cdot S \geq 0 & ~~~\forall v \in V \\
    Y_v + \frac{w(v)}{2} \cdot S \leq b & ~~~\forall v \in V \\    
    X_v - X_u \geq \frac{w(u)+ w(v)}{2} \cdot S & ~~~\forall (u,v) \in E_H \\
    Y_v - Y_u \geq \frac{w(u)+ w(v)}{2} \cdot S & ~~~\forall (u,v) \in E_V \\ 
    \Delta X_{u,v} \geq X_u - X_v -  \frac{w(u)+ w(v)}{2} \cdot S & ~~~\forall (u,v) \in E\\
    \Delta X_{u,v} \geq X_v - X_u  -  \frac{w(u)+ w(v)}{2} \cdot S & ~~~\forall (u,v) \in E\\
    \Delta Y_{u,v} \geq Y_u - Y_v -  \frac{w(u)+ w(v)}{2} \cdot S & ~~~\forall (u,v) \in E\\
    \Delta Y_{u,v} \geq Y_v - Y_u -  \frac{w(u)+ w(v)}{2} \cdot S & ~~~\forall (u,v) \in E
\end{align}

The objective of our linear program is to minimize the function below. Here, we prioritize maximizing scale $S$ by setting $f = -|E|(a+b)$, followed by minimizing the total distance between neighboring vertices in $\mathcal{L}$.
\[ f \cdot S + \sum_{(u,v) \in E} \Delta X_{u,v} + \Delta Y_{u,v} \]

We observe that the constraints derived from the Layout Guide are not quite sufficient to ensure that the result is free from overlap. Vertices that are connected via a directed path with the same label will be overlap free, but not all pairs satisfy this property necessarily.
While we could derive additional constraints upfront, we rather employ a technique often applied for Integer Linear Programming: lazy constraints. 

That is, we solve the above program, and detect whether there are any overlaps. If there are none, we are done. Otherwise, for each overlap, we add a separation constraint: if their coordinates differ more in the vertical direction, we add a vertical separation constraint (as for edges in $E_V$) and a horizontal separation constraint otherwise (as for edges in $E_H$). Note that we do not add $\Delta$ variables for these pairs, as we are not specifically aiming to keep them close. Then, we re-solve the linear program and test for overlap again. This process must terminate as eventually all pairs will have a separation constraint (either directly or transitively) and result in an overlap-free Demers cartogram. 

\section{Results}
\label{app:results}

In this section, we showcase and discuss the results of our responsive mapping framework, via its application to rectangular and Demers cartograms.

We fix the slack to 0.3 for rectangular cartograms and to 0.2 for Demers cartograms, as these values work well for our instances overall. The rectangular cartograms use slightly more slack as the eventual shape of each region is more flexible, compared to the rigid use of squares in the Demers cartogram. We specifically avoid further fine-tuning parameters on instance level: though such practice can improve results on an individual basis, it may obfuscate benefits and drawbacks and prevent generalization of the results. 

The online supplementary material includes a video\footnote{https://youtu.be/EZN7brOWAv4}, demonstrating these responsive rectangular and Demers cartograms, for all three datasets and both the min-change and min-width heuristics. Below, we discuss the results in this video and the snapshots thereof in the corresponding figures in this appendix.
Note that all figures follow the same structure, as also used in the main paper: the top-left container has an aspect ratio of 1:1; the other containers are compressed horizontally and vertically by 20\%, 50\%, and 70\% and then stretched in the other direction such that the area of all containers is equal.

\mypar{Regions of England.}
\Cref{fig:uk_rect_output-appendix} and \Cref{fig:uk_minchange_output} shows the results of our framework for the min-width and min-change heuristic for comparison. Note that the former is the same as Figure~\ref{fig:uk_rect_output} in the main paper, replicated here to facilitate comparison.

We observe that the square and the first horizontally compressed result match the overall shape and structure of the geographic map quite well, given the constraints. This is likely due to the geographic map being slightly taller than wide. Furthermore, the limit in both directions achieves the provided extremal orders.

However, for the other container sizes, we see that the min-width heuristic achieves layouts with lower overall aspect ratio (rectangular cartogram) or larger squares (Demers cartogram) compared to the min-change heuristic. 

We do observe that the Demers cartograms do not always use the space effectively. For example, the two smallest regions in the 50\% vertically compressed layout could easily be on above or below their neighboring regions. While this particular instance can possibly be improved by using a different slack parameter, we note that this is also an effect of ensuring that layout guides are stable and consistent: once a choice is made, it is no long reverted. Furthermore, by the inherent design of squares only, fully occupying the container in both directions is generally not the optimal result.

In the video, we may observe a difference between the rectangular and Demers cartograms (in fact, for each of the datasets): where the former changes smoothly while the layout guide stays the same, the latter occasionally exhibits quite abrupt changes. This is caused by the linear program, being based purely on the layout guide, admits many optimal solutions. For example, when the container is not fully used, any translation that keeps the result within the container achieves the same objective value. As a result, minor changes in the formulation can cause the linear-program solver to select a different result. Adding secondary objective functions can alleviate this issue.

\mypar{Departments of France.}
\Cref{fig:france-transversal} shows the constructed reference layout, in which six sea regions were added to support capturing its shape in the resulting visualizations.

\Cref{fig:france_min-width_output} and \Cref{fig:france_min-change_output} shows the results of our framework for the min-width and min-change heuristic for comparison.

We observe that for the square and near-square results match the overall shape and structure of the geographic map quite well, given the constraints. This is likely due to the geographic map overall quite square as well. Furthermore, the limit in both directions achieves the provided extremal orders (not shown in figures).

However, for the other container sizes, we see that the min-width heuristic achieves layouts with lower overall aspect ratio (rectangular cartogram) or larger squares (Demers cartogram) compared to the min-change heuristic. We specifically see the formation of long paths in the layout guide, manifesting as stacks (horizontally compressed) or rows (vertically compressed) of narrow rectangles or squares, in the min-change heuristic. The min-width heuristic does not exhibit this behavior as strongly. However, in the rectangular cartogram

\mypar{Contiguous United States.}
\Cref{fig:USA-transversal} shows the constructed reference layout, in which eleven sea regions were added to support capturing its shape in the resulting visualizations. 

\Cref{fig:usa_min-width_output} and \Cref{fig:usa_min-change_output} shows the results of our framework for the min-width and min-change heuristic for comparison.

We observe that for the square and first vertically compressed results match the overall shape and structure of the geographic map quite well, given the constraints. This is likely due to the geographic map overall being slightly wider than tall. Furthermore, the limit in both directions achieves the provided extremal orders on the actual regions (not shown in figures).

Between the two heuristics, we can largely make the same observations here as for the previous dataset: the emergence of long paths in the layout guide cause for many narrow rectangles or stacks of small squares -- in this dataset even more so that in the previous ones. Hence, we prefer the min-width heuristic over the min-change heuristic overall as it consistently performs better. 

We do observe that in these results, the use of sea regions leads to undesirable behavior, as they appear in places with the goal of providing shape while the container is compressed beyond the levels that could sustain something resembling the overall map. Moreover, it even disconnects the contiguous states. 
As also discussed in the main paper, however, we leave adapting the constraints for the sea regions in a responsive manner is left to future work.

\begin{figure*}[p]
    \centering
    \includegraphics[page=1]{figs/UK_rect_output.pdf}
    \caption{Results for the min-width heuristic on the regions of England. (left) Geographic map and its extremal horizontal and vertical orders of the land regions. (top) Rectangular cartograms produced using layout guides from the map arranger. (bottom) Demers cartograms produced using layout guides from the map arranger.}
    \label{fig:uk_rect_output-appendix}
\end{figure*}

\begin{figure*}[p]
    \centering
    \includegraphics[page=2]{figs/UK_rect_output.pdf}
    \caption{Results for the min-change heuristic on the regions of England. (left) Geographic map and its extremal horizontal and vertical orders of the land regions. (top) Rectangular cartograms produced using layout guides from the map arranger. (bottom) Demers cartograms produced using layout guides from the map arranger.}
    \label{fig:uk_minchange_output}
\end{figure*}

\begin{figure*}
    \centering
    \includegraphics[page=1]{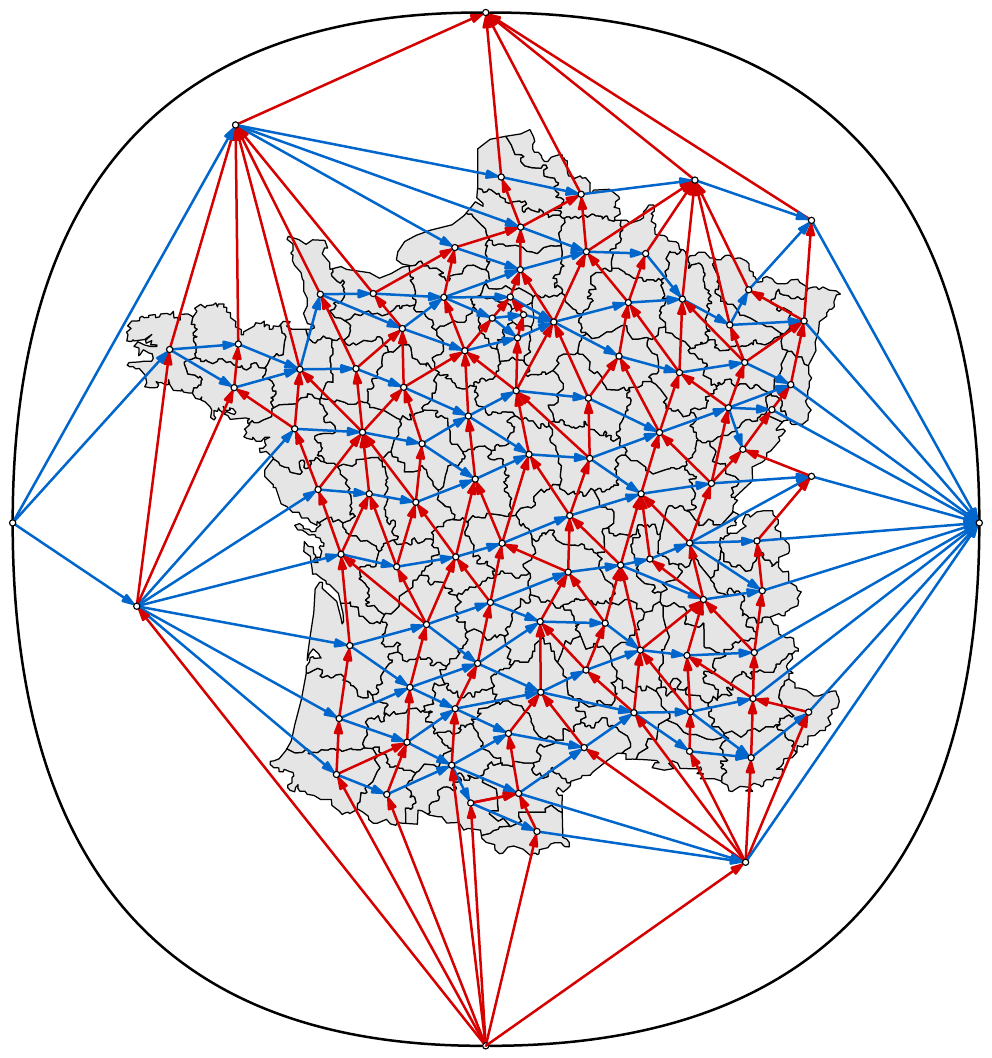}
    \caption{Reference layout for the departments of France dataset. }
    \label{fig:france-transversal}
\end{figure*}

\begin{figure*}[p]    
    \centering
    \includegraphics[page=1]{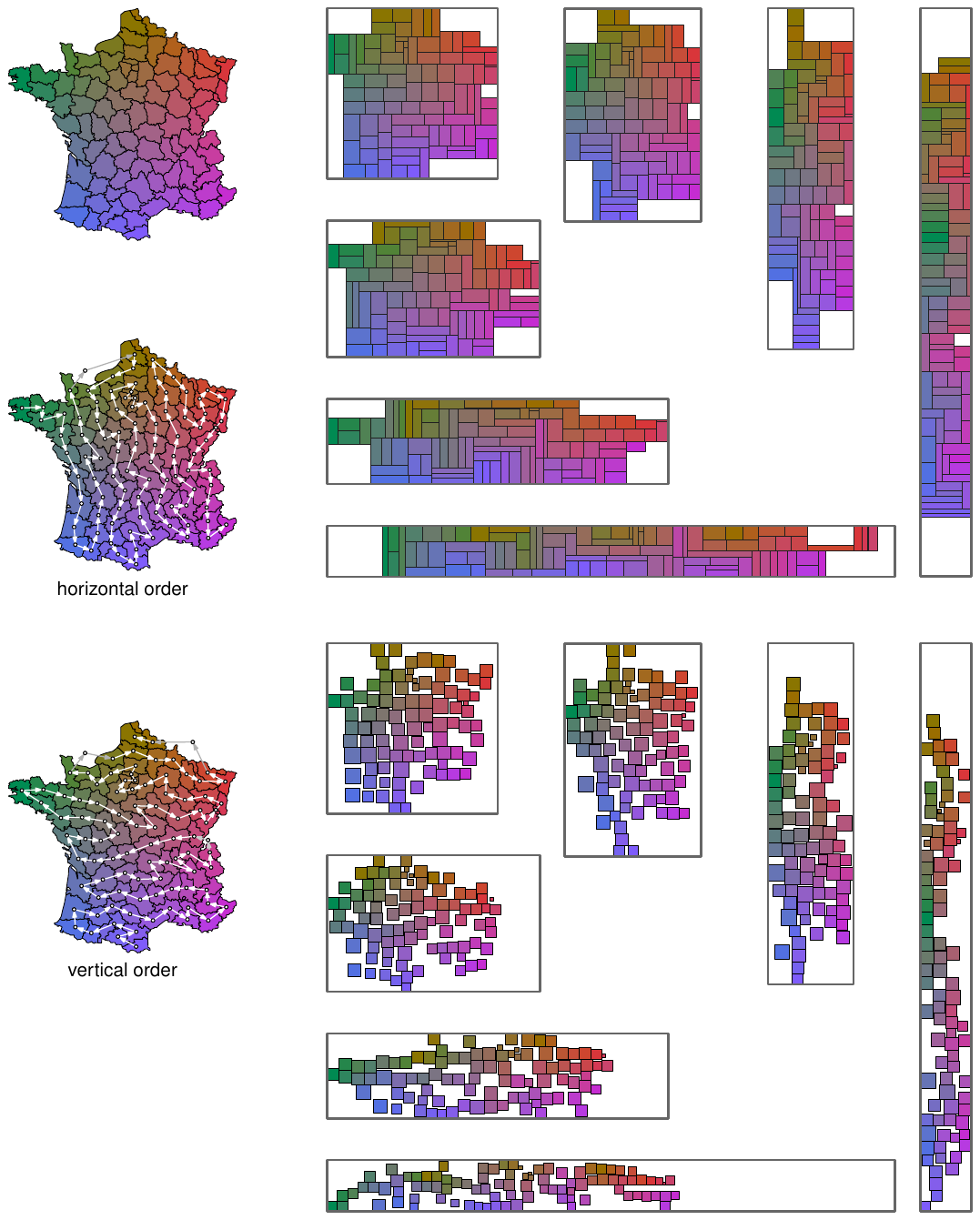}
    \caption{Results for the min-width heuristic on the departments of France. (left) Geographic map and its extremal horizontal and vertical orders of the land regions. (top) Rectangular cartograms produced using layout guides from the map arranger. (bottom) Demers cartograms produced using layout guides from the map arranger.}
    \label{fig:france_min-width_output}
\end{figure*}

\begin{figure*}[p]
    \centering
    \includegraphics[page=2]{figs/France_output.pdf}
    \caption{Results for the min-change heuristic on the departments of France. (left) Geographic map and its extremal horizontal and vertical orders of the land regions. (top) Rectangular cartograms produced using layout guides from the map arranger. (bottom) Demers cartograms produced using layout guides from the map arranger.}
    \label{fig:france_min-change_output}
\end{figure*}

\begin{figure*}
    \centering
    \includegraphics[page=1]{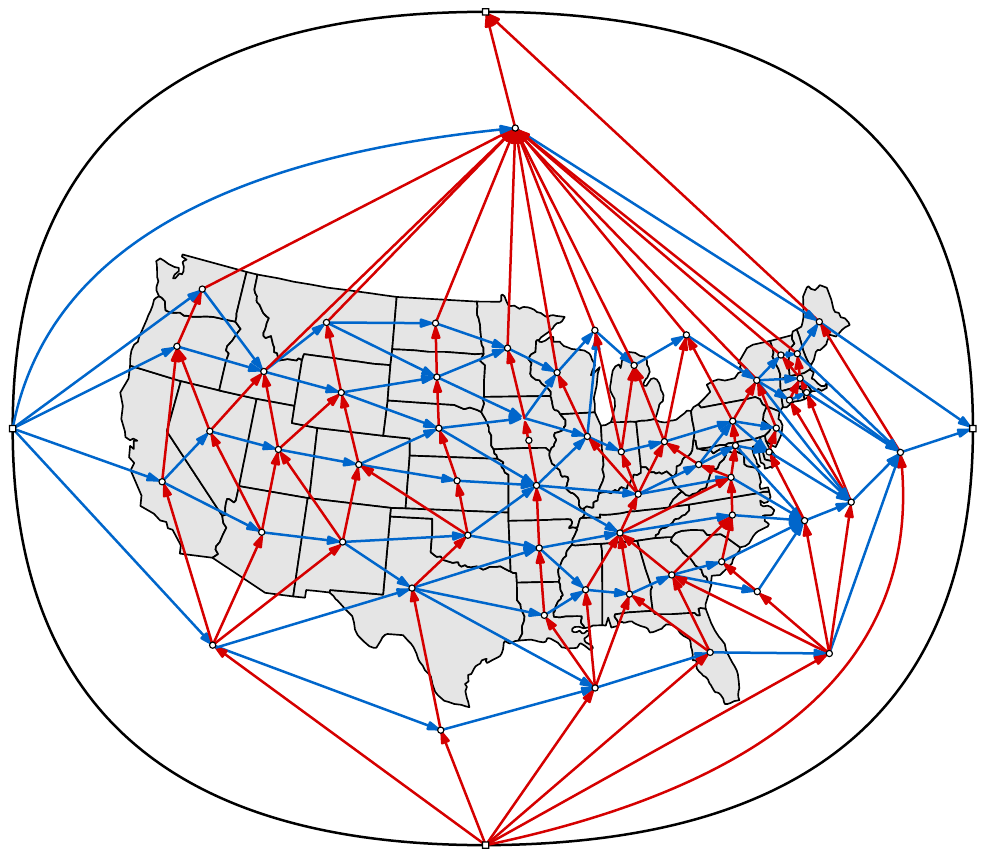}
    \caption{Reference layout for the contiguous US dataset. }
    \label{fig:USA-transversal}
\end{figure*}

\begin{figure*}[p]
    \centering
    \includegraphics[page=1]{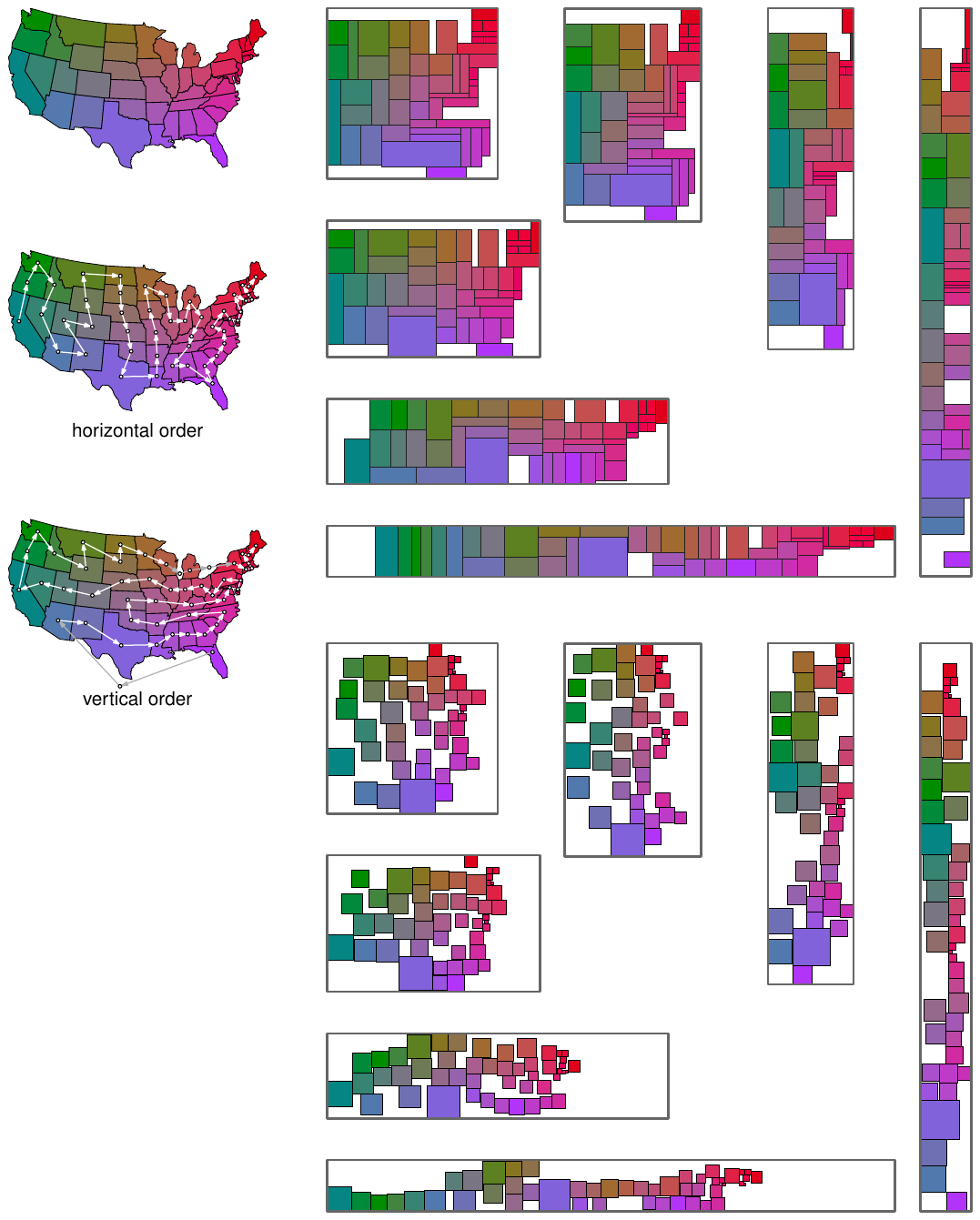}
    \caption{Results for the min-width heuristic on the contiguous US. (left) Geographic map and its extremal horizontal and vertical orders of the land regions. (top) Rectangular cartograms produced using layout guides from the map arranger. (bottom) Demers cartograms produced using layout guides from the map arranger.}
    \label{fig:usa_min-width_output}
\end{figure*}

\begin{figure*}[p]
    \centering
    \includegraphics[page=2]{figs/USA_output.pdf}
    \caption{Results for the min-change heuristic on the contiguous US. (left) Geographic map and its extremal horizontal and vertical orders of the land regions. (top) Rectangular cartograms produced using layout guides from the map arranger. (bottom) Demers cartograms produced using layout guides from the map arranger.}
    \label{fig:usa_min-change_output}
\end{figure*}

\end{document}